\newtheorem{definition}{Definition}
\newtheorem{proposition}{Proposition}
\newtheorem{remark}{Remark}
\newcommand{\suptiny}[3]{\ensuremath{^{\hspace{#1 pt}\protect\raisebox{#2 pt}{\tiny{$ #3$}}}}}
\begin{document}
\title{Protocols for sharing genuine multipartite entanglement by employing copies of biseparable states}

\author{Swati Choudhary\orcidlink{0009-0005-4542-4290}}

\affiliation{Center for Quantum Science and Technology and  
Center for Computational Natural Sciences and Bioinformatics,
International Institute of Information Technology Hyderabad, Prof. CR Rao Road, Gachibowli, Hyderabad 500 032, Telangana, India}
\affiliation{Harish-Chandra Research Institute, Chhatnag Road, Jhunsi, Prayagraj  211 019, India}
\author{Ujjwal Sen\orcidlink{0000-0002-0091-5847}}
\affiliation{Harish-Chandra Research Institute,  Chhatnag Road, Jhunsi, Prayagraj  211 019, India}
\affiliation{Homi Bhabha National Institute, Training School Complex, Anushakti Nagar, Mumbai
400 094, India}

\author{Saronath Halder\orcidlink{0000-0001-5738-5245}}
\affiliation{Department of Physics, School of Advanced Sciences, VIT-AP University, Beside AP Secretariat, Amaravati 522 241, Andhra Pradesh, India}

\begin{abstract}
Sharing genuine multipartite entanglement by considering collective use of copies of biseparable states, which are entangled across all bipartitions but lack genuine multipartite entanglement at the single-copy level, plays a central role in several quantum information processing protocols, and has been referred as  genuine multipartite entanglement activation. We present a protocol for three-qutrit systems showing that two copies of rank-two biseparable states, entangled across every bipartition, are sufficient to generate a genuinely multipartite entangled state with nonzero probability. This contrasts with the three-qubit scenario where many copies of biseparable states might be required for sharing genuine multipartite entanglement. We subsequently generalize our protocols to the case of an arbitrary number of parties. Interestingly, the proposed construction naturally leads to the activation of genuinely nonlocal correlations, yielding a result that is stronger than genuine multipartite entanglement activation alone.
\end{abstract}

\maketitle
\section{Introduction}\label{Intro}
The activation~\cite{WLP2013,ADNPS2014,KLMG2012,P2012,QSetal2018,TKB2020,BH2021,V2023,YMMBFH2022,PV2022,BLMAF2025} of quantum resources~\cite{SV2009, CH2016, CG2019, TR2019, SR2020, MM2024, BR2024} has emerged as a fundamental theme in quantum information science, driven by the fact that essential resources such as entanglement~\cite{HHHH2009} and nonlocality~\cite{BCPSW2014} are highly fragile to noise and decoherence~\cite{BOOK1996,Z2003,S2005,IK2022}. Understanding how to recover or enhance these resources when they are initially absent or insufficient is crucial for the development of robust quantum technologies. From the perspective of both foundational studies and practical implementations, protocols that convert weaker or differently structured correlations into operationally useful resources play a dual role. For example, they expand our conceptual understanding of entanglement as a resource. Furthermore, they also relax stringent experimental requirements imposed by imperfect state preparation and environmental noise.  A particularly challenging instance of this general task is the activation of genuine multipartite entanglement (GME)~\cite{YMMBFH2022,PV2022,TLHVT2024,LKSSXNO2024}. Informally, the question that one usually address in the context of GME activation is the following. We consider two or more copies of a multipartite state that is not genuinely multipartite entangled but that nevertheless displays entanglement across every bipartition (i.e., the state is biseparable but entangled in all bipartitions). Is it possible that the parties employing only local operations and classical communication (LOCC), generate a state that is genuinely multipartite entangled from the copies of the given state? Resolving this question is important because there are tasks for which it is necessary to use genuine multipartite correlations~\cite{HB1999,CGL1999,T2010,EKS2016,C2017,PHS2021} and the presence of GME is prerequisite for a broad family of genuine multipartite correlations~\cite{GLTSvL2016,KLHP2021,KLP2022,ACMG2007}.

Recent research has focused on the activation of genuine multipartite entanglement (GME), where the initial states are not genuinely multipartite entangled but belong to the class of biseparable states. Notably, works such as~\cite{YMMBFH2022} have investigated activation in special families of states, for example, isotropic GHZ states which are highly symmetric states.   Most activation protocols assume access to at least two copies of the underlying biseparable state and rely on correlated local measurements across these copies to produce genuinely entangled outputs.  While these constructions are powerful in theory, the reliance on symmetry (for example, studies that focus on highly symmetric isotropic GHZ families \cite{YMMBFH2022}) limit applicability in realistic settings. This useful but limited knowledge of GME activation motivates the search for activation protocols that place milder assumptions on available resources. In particular, a protocol that operates on a single copy at a time, does not presuppose symmetry of the state, and avoids auxiliary genuinely entangled resources, would be  conceptually informative. At present, however, a complete characterization of which biseparable states admit such activation and by what class of operations remains an open problem.

In this work, we introduce conceptually simple \textcolor{black}{sequential extraction-and-fusion} schemes for GME activation that departs from simultaneous multi-copy requirement in all steps. Our protocol first distill~\cite{BBPSSW1996,BDSW1996,D1996, B1996,HHH1997,HHH1998} out entangled bipartite pairs between the parties considering copies of the biseparable states adaptively. These extracted singlets are then consumed to produce a multipartite genuinely entangled state shared among all parties with some nonzero probability.  Therefore, we can say that the principal advantages of our approach are two-fold. First, 
our protocol does not require all parties to possess their respective shares of every copy at the outset. Rather, it proceeds sequentially, employing copies of biseparable states in a stepwise manner. In particular, bipartite entanglement is first established between selected pairs of parties (for example, in a four-partite scenario, between \(AB\), \(BC\), and \(CD\)). Subsequently, intermediate parties (such as \(B\) and \(C\)) perform unsharp measurements on their subsystems, thereby enabling the generation of a genuinely multipartite entangled state. Thus, at the extraction stage, simultaneous utilization of all copies is not required. However, at the fusion stage, the implementation of joint measurement(s) appears to be essential for achieving genuine multipartite entanglement. \textcolor{black}{Note that the operational feature of the present constructions rises from the separation between the extraction of bipartite entangled links and their subsequent fusion. This may not be interpreted as a general reduction in experimental or hardware complexity. We just consider two steps: copies of biseparable states are processed separately during link extraction, while joint local measurements on the extracted subsystems are employed at the fusion stage. An alternative approach} may involve a teleportation-based protocol, wherein a local genuinely multipartite entangled state is first prepared and its subsystems are subsequently distributed sequentially using the extracted states. Unlike teleportation based protocol, our proposed protocol does not require any of the party to locally prepare a genuinely entangled state. Furthermore, we also try to reduce the number of copies to achieve the GME activation with some nonzero probability. Second, in every protocol, we basically create pure multipartite genuinely entangled states. Therefore, our protocols distills pure state entanglement from mixed states and since, they are pure, they may exhibit genuine multipartite nonlocality\footnote{Genuine multipartite nonlocality is a type of multipartite correlation which is stronger than genuine multipartite entanglement. This is in the following sense: for genuine multipartite nonlocal correlation, genuine multipartite entanglement is necessary, however, all genuine multipartite entangled states do not exhibit genuine multipartite nonlocal correlation \cite{BCPSW2014}.} which is stronger than just GME activation. In this way, our result generalizes that of \cite{HSTT2003}, which proves that all rank-two bipartite states are distillable. Here, we extend this to the multipartite scenario by showing that rank-two tripartite biseparable states that are entangled across all bipartitions are useful to distill GME states. 

The remainder of this paper is organized as follows. In Section~\ref{Preliminaries}, we review the necessary preliminaries that will be used in subsequent discussions. In Section~\ref{results}, we present results that include new protocols for GME activation. These protocols inherently account for the activation of genuine nonlocality as well. Finally, we conclude our discussion in Section~\ref{Conclusion}.

\section{Preliminaries}\label{Preliminaries}
In this section, we outline the theoretical framework and summarize the relevant literature that constitute the foundation of our analysis.

\begin{definition}
\textit{Pure k-separable state}~\cite{HHHH2009,GT2009,DCLSSS2016,GCS2023}. A pure quantum state of an $n$-partite system with Hilbert space $\mathcal{H}\suptiny{0}{0}{(n)}=\bigotimes_{i=1}^{n}\mathcal{H}_{i}$ is called to be as k-separable, if it can be written as a product of pure states of a maximum of k sub-systems, viz., $\ket{\Psi_{k-sep}}=\ket{\Psi_{1}}\otimes\ket{\Psi_{2}}\otimes....\ket{\Psi_{k}}$, where k$\le$n.
\end{definition}

\begin{definition}
\textit{Mixed k-separable state}~\cite{GHG2010}.  A density operator is called $k$-separable if it can be decomposed as a convex sum of pure states that are all separable with respect to some $k$-partition, i.e., if it is of the form :
\begin{align}
\rho\suptiny{0}{0}{(k)} &=\,
\sum\limits_{i} p_{i} 
\ket{\phi_i\suptiny{0}{0}{(k)}}\!\!\bra{\phi_i\suptiny{0}{0}{(k)}}
\,.\label{ksep}
\end{align}
Each $\ket{\phi_i\suptiny{0}{0}{(k)}}$ may be $k$-separable with respect to~a different $k$-partition. $k=2$ corresponds to biseparable states. 
\end{definition}

\begin{definition}
All pure states that are not at least biseparable are called pure genuinely $n$-partite entangled states. Formally, these states correspond to $k=1$.
\end{definition}

{\it Our task:} We want to consider multiple copies of biseparable states. These states are mixed states and they are entangled across every bipartition. The copies of the states are given one by one, i.e., one copy is given at a time. In this situation, our task is to produce a pure genuinely entangled state with some nonzero probability. Ultimately, multiple parties share genuine multipartite correlation. Formally, we say the following about GME activation.


\begin{definition}
Genuine multipartite entanglement (GME) activation~\cite{YMMBFH2022,PV2022,TLHVT2024,LKSSXNO2024}: Creating GME state(s) with some nonzero probability from multiple copies of multipartite biseparable states that are entangled across all bipartitions, by means of local operations is termed as GME activation.
\end{definition}
From the definition of mixed biseparable states, it follows that a biseparable state need not be separable with respect to a fixed bipartition. Instead, such states may appear as convex mixtures of states separable across different bipartitions. Hence, while they lack genuine multipartite entanglement, they may still exhibit bipartite entanglement in distinct partitions. As an illustration, consider the three-qubit biseparable state of the following form;
\begin{align}
\rho
= p\ket{\phi^{+}}\bra{\phi^{+}}\otimes \ket{0}\bra{0} 
+ (1-p) \ket{1}\bra{1} \otimes \ket{\phi^{+}}\bra{\phi^{+}},
\end{align}
where $\ket{\phi^{+}} = (\ket{00} + \ket{11})/\sqrt{2}$ is the maximally entangled Bell state and $0 < p < 1$. Here, with probability $p$, subsystems $A$ and $B$ share $\ket{\phi^+}$ while $C$ is in the pure state $\ket{0}$; with probability $(1-p)$, entanglement resides between $B$ and $C$ while $A$ is in $\ket{1}$. Thus, $\rho_{ABC}$ is biseparable without being separable across any fixed bipartition. 

We are now ready to present our main findings.

\section{Results}\label{results}
We start with a technical result for three qubits.

\begin{proposition}\label{prop1}
Any three-qubit rank-2 biseparable state that is entangled across every bipartition, is useful for genuine multipartite entanglement (GME) activation.  
\end{proposition}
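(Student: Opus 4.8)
The plan is to carry out the two-stage scheme announced in the introduction: first \emph{localize} bipartite entanglement onto two overlapping pairs of parties by measuring the third party, then distill genuine two-qubit singlets from the many copies (supplied one at a time), and finally \emph{fuse} the two singlets at the common party into a genuinely tripartite entangled output. The enabling input is the standard fact that every entangled two-qubit state is distillable to singlet form~\cite{HHH1997}, so it suffices to produce, with nonzero probability, an entangled two-qubit conditional state between each of two overlapping pairs. Since any two of the three pairs $\{AB,BC,CA\}$ share exactly one party, establishing this for two of them is enough.

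First I would reduce the proposition to the following \emph{localization claim}: for a rank-two three-qubit state $\rho$ entangled across every bipartition, there is a rank-one local measurement on one party (say $C$) and an outcome $\ket{m}_{C}$, occurring with nonzero probability, such that the conditional state $\sigma_{AB}\propto \langle m|_{C}\rho|m\rangle_{C}$ of $A$ and $B$ is entangled, and symmetrically for a measurement on $A$ leaving $B,C$ entangled. To prove the claim I would exploit the rank-two structure directly. Writing $\rho=\lambda_1\ketbra{u}{u}+\lambda_2\ketbra{v}{v}$ with two-dimensional range $\mathcal{S}=\mathrm{span}\{\ket{u},\ket{v}\}$, projecting $C$ onto $\ket{m}$ produces a two-qubit state whose range is the image of $\mathcal{S}$ under $\langle m|_{C}$, generically two-dimensional; for two qubits, such a state is entangled precisely when it fails to be positive under partial transposition. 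I would then argue by contradiction: a separable rank-two two-qubit state must contain a product vector in its range, so if $\sigma_m$ were separable for \emph{every} measured party and \emph{every} outcome, the projected ranges of $\mathcal{S}$ would contain a full (two-parameter) family of product vectors, which I expect to force $\mathcal{S}$ into a form in which one party factorizes, contradicting entanglement across that cut. I expect this structural step to be the main obstacle. Concretely, I would first verify the claim on the canonical representative $\rho=p\,\ketbra{\phi^{+}}{\phi^{+}}_{AB}\otimes\ketbra{0}{0}_{C}+(1-p)\,\ketbra{1}{1}_{A}\otimes\ketbra{\phi^{+}}{\phi^{+}}_{BC}$, where measuring $C$ in the computational basis already leaves $A,B$ in a manifestly non-positive-partial-transpose state, and then promote this to the general rank-two case through a local-unitary normal form of $\mathcal{S}$.

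Granting the localization claim, the protocol is immediate. Feeding the copies one at a time, party $C$ repeats its measurement and, conditioned on the favorable outcome, $A$ and $B$ accumulate entangled two-qubit states which they distill to a singlet $\ket{\phi^{+}}_{AB}$; symmetrically $B$ and $C$ distill $\ket{\phi^{+}}_{BC}$. Party $B$, now holding one half of each singlet, applies a CNOT between its two qubits followed by a computational-basis measurement on the target qubit; this fusion step converts the two singlets into a GHZ state $(\ket{000}+\ket{111})/\sqrt{2}$ shared by $A,B,C$ up to a local correction, which is genuinely tripartite entangled. Every stage succeeds with nonzero probability, so the procedure achieves GME activation. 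Finally, since the output is a pure GHZ state, the same construction simultaneously witnesses the activation of genuine multipartite nonlocality, the stronger statement emphasized in the abstract.
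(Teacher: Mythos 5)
Your proposal is correct and follows essentially the same route as the paper: measure out the third party to localize an entangled two-qubit conditional state on each of two overlapping pairs ($AB$ and $BC$), invoke two-qubit distillability~\cite{HHH1997} (with entanglement of the conditional state guaranteed because it mixes a pure entangled state with a product state~\cite{HSTT2003}), and then combine the two Bell pairs at $B$ --- your CNOT-plus-measurement fusion is operationally the same as the paper's alternative non-teleportation step (Bob's parity measurement $\{\ketbra{00}{00}+\ketbra{11}{11},\,\ketbra{01}{01}+\ketbra{10}{10}\}$ followed by a $\{\ket{+},\ket{-}\}$ measurement), while the paper's primary route uses teleportation instead. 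The one place you go beyond the paper is the step you honestly flag as the main obstacle: proving the localization claim for an \emph{arbitrary} rank-2 biseparable state via product vectors in projected ranges; note that the paper does not prove this either --- it simply posits the canonical two-term decomposition $p\ketbra{\Phi}{\Phi}\otimes\ketbra{\phi}{\phi}+(1-p)\ketbra{\psi}{\psi}\otimes\ketbra{\Psi}{\Psi}$ with the two pure states separable across different cuts, and asserts without argument that other forms reduce to it. So your sketch, if completed, would actually be more rigorous than the published proof at exactly that point; as it stands, both arguments share the same lacuna, and on the canonical form your protocol is fully correct.
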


\begin{proof}
To prove the proposition, we first have to understand the form of a three-qubit rank-2 biseparable state that is entangled across every bipartition. Then, considering many copies of such a state, one should achieve genuine entanglement. 

We consider that the state is shared among three parties (A)lice, (B)ob, and (C)harlie. The form of the state is given as the following. 
\begin{equation}\label{3-qubit}
\rho = p\ket{\Phi}\bra{\Phi}\otimes\ket{\phi}\bra{\phi}+(1-p)\ket{\psi}\bra{\psi}\otimes\ket{\Psi}\bra{\Psi},
\end{equation}
where $\ket{\Phi}\bra{\Phi}\otimes\ket{\phi}\bra{\phi}$ and $\ket{\psi}\bra{\psi}\otimes\ket{\Psi}\bra{\Psi}$ are two rank-1 biseparable states, convex combination of which produces a rank-2 bisebarable state, $p$ is nonzero probability. $\ket{\Phi}$ and $\ket{\Psi}$ are two-qubit entangled states and they are shared between different pairs of parties. These pairs of parties are \{A, B\} and \{B, C\}. The states $\ket{\phi}$ and $\ket{\psi}$ are single qubit pure states hold by C and A respectively. It is important to note that the pure states considered in the decomposition of $\rho$ are product in different bipartitions. This crucial fact enables the mixed state $\rho$ to exhibit entanglement across all bipartitions. Alternatively, if $\rho$ is entangled across every bipartition, then the states $\ket{\Phi}\bra{\Phi}\otimes\ket{\phi}\bra{\phi}$ and $\ket{\psi}\bra{\psi}\otimes\ket{\Psi}\bra{\Psi}$ must be separable in different bipartitions. Here, it is AB-C and A-BC bipartitions respectively. We mention that the type of states we talk about here can have different forms other than the above form. For instance, one may consider convex combinations of states that are product with respect to other bipartitions, such as $B{:}AC$, $A{:}BC$, or $C{:}AB$. An entirely analogous argument then applies in the sense that the resulting mixed state remains entangled across all bipartitions, while the constituent pure states are product with respect to the corresponding alternative bipartitions. Although the details of the protocol may be modified accordingly, its overall structure remains unchanged. Therefore, it suffices to focus on the form given in Eq.~(\ref{3-qubit}). In particular, if GME activation can be achieved for the state in this form, the same conclusion extends to states of these alternative forms as well.

In the next step, we assume that the party C performs a projective measurement on the state $\rho$. The measurement is defined by the operators, $\{\ket{\phi}\bra{\phi}, \mathbb{I}-\ket{\phi}\bra{\phi}\}$, where $\mathbb{I}$ is the identity operator acting on the qubit Hilbert space. Since, the state $\ket{\Psi}$ is a two-qubit entangled state, it must have the form $\ket{\phi_1}\ket{\phi}+\ket{\phi_1^\prime}\ket{\phi^\prime}$ (not normalized), where the pairs of states $\{\ket{\phi_1}, \ket{\phi_1^\prime}\}$, $\{\ket{\phi}, \ket{\phi^\prime}\}$ form two-qubit bases. Clearly, in the measurement of C if the operator $\ket{\phi}\bra{\phi}$ clicks then the post-measurement state becomes another rank-2 state which has the following form. 
\begin{equation}
\rho^\prime = p^\prime\ket{\Phi}\bra{\Phi}\otimes\ket{\phi}\bra{\phi}+(1-p^\prime)\ket{\psi}\bra{\psi}\otimes\ket{\alpha}\bra{\alpha}\otimes\ket{\phi}\bra{\phi},
\end{equation}
where $\ket{\alpha}$ is a single qubit pure state, hold by the party B, $p^\prime$ is nonzero probability. So, if we trace out the subsystem of C then we are left with a rank-2 bipartite two-qubit entangled state. The proof of entanglement follows from the fact that it is a convex combination of a pure entangled state and a product state \cite{HSTT2003}. Such a state is known to be distillable \cite{HHH1997, HSTT2003}, i.e., if sufficient copies of the state are available, then, one can have two-qubit maximally entangled state. In this way, one can produce a maximally entangled state between A and B starting from many copies of $\rho$ with some nonzero probability. Using other copies of $\rho$, if the protocol starts with a projective measurement by A defined via the operators $\{\ket{\psi}\bra{\psi}, \mathbb{I}-\ket{\psi}\bra{\psi}\}$, then it is possible to produce a maximally entangled state between B and C with some nonzero probability following the similar steps. 

In this way, from many copies of $\rho$ it is possible to produce a perfect resource state with some nonzero probability and the rest is due to a teleportation based protocol, i.e., B can produce any three-qubit GME state locally and then, B can teleport one qubit to A and another qubit to C. In this way, the activation of GME state occurs.

If they do not want to proceed with a teleportation based protocol, they can do the following. Having two two-qubit pure entangled states, one is shared between Alice-Bob and the other is shared between Bob-Charlie, these bipartite entangled states can be combined to activate genuine entanglement among all three parties A, B, and C. The joint state is
\begin{equation}\label{step2}
\begin{array}{l}
\ket{\psi} = \ket{\psi}_{AB} \otimes \ket{\psi}_{BC}\\[1 ex]
= a^2 \ket{0}\ket{00}\ket{0} + b^2 \ket{1}\ket{11}\ket{1} \\[1 ex]
+ ab \big(\ket{0}\ket{01}\ket{1} + \ket{1}\ket{10}\ket{0}\big),
\end{array}
\end{equation}
where $\ket{\psi}_{AB}=a\ket{00}+b\ket{11}$ and $\ket{\psi}_{BC}=a\ket{00}+b\ket{11}$ written in the Schmidt form, hence $a,b$ are real with $a^2+b^2=1$ and $a=b=\frac{1}{\sqrt{2}}$ corresponds to maximally entangled state. Bob performs a joint measurement on the two qubits he is holding, defined by- $\Big\{P_1,P_2\Big\}:=\Big\{ \ket{00}\bra{00} + \ket{11}\bra{11},\; \ket{01}\bra{01} + \ket{10}\bra{10}\Big\}$. If $P_1$ clicks, the output is proportional to $a^2 \ket{0}\ket{00}\ket{0}  + b^2 \ket{1}\ket{11}\ket{1}$. Bob then measures in the $\{ \lvert + \rangle, \lvert - \rangle \}$ basis on, say, the first qubit he is holding. This gives $\ket{+} \otimes \left( a^{2}\ket{000} + b^{2}\ket{111} \right)$. He then traces out the first qubit. So that the final state that $ABC$ share becomes $a^{2}\ket{000} + b^{2}\ket{111} 
\quad \text{(up to normalization).}$ Note that corresponding to $P_{2}$, final state would be the similar type.
\end{proof}

To illustrate the above proposition with an example, consider a tripartite setting where three spatially separated parties Alice, Bob, and Charlie share a three-qubit biseparable state, defined by-
\begin{align}
\rho
= p\ket{\phi^{+}}\bra{\phi^{+}}\otimes \ket{0}\bra{0} 
+ (1-p) \ket{1}\bra{1} \otimes \ket{\phi^{-}}\bra{\phi^{-}},
\end{align}
where $\ket{\phi^+}$ and $\ket{\phi^-}$ are two-qubit entangled states [$\ket{\phi^\pm}=(1/\sqrt{2})$ $(\ket{00}\pm\ket{11})$] and they are shared between different pairs of parties. These pairs of parties are \{A, B\} and \{B, C\}. The states $\ket{0}$ and $\ket{1}$ are possessed by C and A respectively. We analyze the effect of a local projective measurement performed by Charlie on his qubit in the computational basis $\{\ket{0}, \ket{1}\}$. This measurement yields two possible outcomes. With nonzero probability, the outcome $\ket{0}$ is obtained, in which case the unnormalized post-measurement state of Alice and Bob (after tracing out the qubit of Charlie) is given by-
\begin{align}
\tilde{\rho} = \ket{\phi^+}\bra{\phi^+} + \ket{10}\bra{10}
\end{align}
It is straightforward to verify that $\tilde{\rho}$ is inseparable as it is a convex combination of a pure entangled state and a product state \cite{HSTT2003}. In fact, the present state is a two-qubit entangled state which must have distillable entanglement \cite{HHH1997}. Thus, $\tilde{\rho}$ allows the extraction of a maximally entangled Bell state between A and B from sufficiently many copies of the initial biseparable state with some nonzero probability. It is important to note that the post-measurement state corresponding to $\ket{1}$ is not considered in our analysis. This is because the resulting state in this case is fully separable, implying the absence of any entanglement. Since, separable states cannot provide any advantage in the context of our objective, they are irrelevant for the subsequent discussion. However, analogous to the construction of the post-measurement state $\tilde{\rho}$, we can obtain a similar state $\tilde{\rho}^\prime$ when Alice performs a measurement in the computational basis on another copy of biseparable state given in the above. The resulting state $\tilde{\rho}^\prime$ also has distillable entanglement and it can be transformed into a maximally entangled Bell state between parties B and C with some nonzero probability. The final step in our construction is the activation of genuine multipartite entanglement (GME) using these two Bell states. This activation process is straightforward and can be done by a teleportation based protocol. Bob can prepare a three-qubit GME state. Then, he can distribute one qubit to Alice and another qubit to Charlie. We now present the following remark.

\begin{remark}\label{rem1}
Any biseparable state is unable to produce genuine nonlocality. However, after our protocol, we are left with a pure three-qubit Greenberger–Horne–Zeilinger (GHZ) state (with some nonzero probability) which can produce genuine nonlocality. Therefore, one can say that many copies of rank-2 biseparable states which are entangled across every bipartition, are also useful to produce genuine nonlocality. Consequently, the proposed protocol is effective not only for the activation of genuine entanglement but also for the activation of genuine nonlocality.
\end{remark}

However, a major drawback of the protocol, discussed so far, is that we are using a distillation protocol in GME activation. This is because for the distillation process, a large number of copies of the initial state are usually required \cite{BBPSSW1996}. So, in the following, we try to construct a protocol for GME activation which does not depend on a distillation process (like the protocol of \cite{BBPSSW1996}). This is basically to reduce the number of copies required for GME activation.

\subsection{Higher dimensional states for GME activation}\label{subsec1}
Here we consider the activation of a GME state in a more general scenario, where the initially shared biseparable state is no longer a three-qubit state. Specifically, we first study the biseparable states $\rho$ which are associated with $\mathcal{C}^3\otimes\mathcal{C}^3\otimes\mathcal{C}^3$.

\begin{proposition}\label{prop2}
There are three-qutrit rank-2 biseparable states which are entangled across every bipartition. From two copies of such a state, it is possible to obtain a genuinely entangled state with some nonzero probability. 
\end{proposition}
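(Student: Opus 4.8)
The plan is to proceed in three stages: first exhibit an explicit state, then verify it meets the hypotheses, and finally give a two-copy protocol that bypasses distillation entirely.

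\textbf{Stage 1 (construction).} I would look for a state of the same schematic shape as in Proposition~\ref{prop1},
\begin{equation}
\rho = p\,\ket{\Phi}\bra{\Phi}_{AB}\otimes\ket{\phi}\bra{\phi}_{C} + (1-p)\,\ket{\psi}\bra{\psi}_{A}\otimes\ket{\Psi}\bra{\Psi}_{BC},
\end{equation}
but now exploit the extra local dimension to \emph{label} the two biseparable branches. Concretely, I would take $\ket{\Phi}_{AB}$ entangled with local supports confined to $\mathrm{span}\{\ket{0},\ket{1}\}$ on both $A$ and $B$, set $\ket{\phi}_{C}=\ket{0}$, put $\ket{\psi}_{A}=\ket{2}$, and take $\ket{\Psi}_{BC}$ entangled with its $C$-support confined to $\mathrm{span}\{\ket{1},\ket{2}\}$. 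The third level acts as a flag: on $A$ the first branch lives in $\{0,1\}$ while the second sits at $\ket{2}$, and on $C$ the first branch is $\ket{0}$ while the second avoids $\ket{0}$. Since the two pure components are orthogonal on $A$, the state is of rank exactly two.

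\textbf{Stage 2 (hypotheses).} I would then check that $\rho$ is biseparable (by construction a mixture of an $AB|C$-product and an $A|BC$-product pure state) and entangled across all three bipartitions. The last point is the verification burden: across $AB|C$ and $A|BC$ one branch is already entangled and I would confirm the mixture does not accidentally become separable, via a PPT or realignment check on the relevant reduced operators, exactly in the spirit of the rank-two argument cited after Proposition~\ref{prop1}; across $B|AC$ both branches contribute and a short computation of the partial transpose should settle it.

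\textbf{Stage 3 (protocol, the crux).} Here is where qutrits beat qubits. On copy~1, Charlie projects onto $\ket{0}_{C}$. Because the second branch has no $\ket{0}$-component on $C$, this outcome annihilates it cleanly and leaves $A$ and $B$ in the \emph{pure} entangled state $\ket{\Phi}_{AB}$ with probability $p$. On copy~2, Alice projects onto $\ket{2}_{A}$; since the first branch has no $\ket{2}$-component on $A$, this leaves $B$ and $C$ in the pure entangled state $\ket{\Psi}_{BC}$ with probability $1-p$. No distillation is required: each single copy yields a pure bipartite entangled pair outright, which is precisely the feature the qubit protocol lacked. Bob now holds two qutrits, one entangled with Alice and one with Charlie, i.e.\ the product $\ket{\Phi}_{AB}\otimes\ket{\Psi}_{BC}$, and I would finish with the same local merge used after Proposition~\ref{prop1}: Bob applies a correlated projection on his two qutrits followed by a Fourier-basis measurement and traces out one qutrit, leaving a GHZ-type genuinely tripartite entangled state among $A$, $B$, $C$ with overall probability $\propto p(1-p)$.

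The main obstacle I anticipate is not Stage~3, which is essentially forced once the supports are fixed, but the joint consistency required in Stages~1--2: the local supports must be arranged so that the orthogonal flag trick works on both $A$ and $C$ simultaneously, \emph{and} the resulting mixture must still be entangled in every bipartition, in particular in the $B|AC$ cut where neither branch is product. I would also confirm that Bob's merge produces a state that is genuinely tripartite entangled rather than merely biseparable, which for a GHZ-type output with all Schmidt weights nonzero is immediate but worth stating explicitly.
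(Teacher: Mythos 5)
Your proof is correct, and the protocol skeleton coincides with the paper's: extract a pure entangled $AB$ pair from copy~1 and a pure $BC$ pair from copy~2 via a single local projection each, then let Bob merge them with the correlated projection $\{P_1,P_2\}$ followed by a $\{\ket{+},\ket{-}\}$-type measurement. The witness state, however, works by a genuinely different mechanism. The paper takes both entangled constituents to be Schmidt-rank-3 states $\ket{\psi}=\sum_{i=0}^{2}a_i\ket{ii}$ with the idle party in $\ket{0}$, and the measuring party projects onto the \emph{complement} $\mathbb{I}-\ket{0}\bra{0}$: this annihilates the branch in which that party holds $\ket{0}$ and truncates the surviving rank-3 state to a still-entangled two-dimensional one, $\ket{\psi'}\propto a_1\ket{11}+a_2\ket{22}$. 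You instead keep the pairs at Schmidt rank 2 and spend the third level as an orthogonal flag, projecting onto the flag states themselves; this is precisely the mechanism the paper deploys in its \emph{next} subsection for the $\mathbb{C}^3\otimes\mathbb{C}^2\otimes\mathbb{C}^3$ state $\sigma$ with flag $\ket{2}$, and the paper explicitly notes after Proposition~\ref{prop2} that one can trade higher Schmidt rank for such flagged states. Your variant buys exact probability bookkeeping ($p$, $1-p$, overall $\propto p(1-p)$) and delivers the pairs $\ket{\Phi},\ket{\Psi}$ undistorted, whereas the paper's construction shows the single-shot extraction does not require disjoint local supports on the measured party (in its state, $\ket{0}_C$ lies in the support of both branches). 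One simplification for your Stage~2, which you flag as the main burden: no PPT or realignment computation is needed. Across $AB|C$ and $A|BC$ your state is a mixture of one pure entangled and one pure product state, hence entangled by the same fact the paper invokes \cite{HSTT2003}; across $B|AC$, your flags make the two branches orthogonal on $A$ (and on $C$), so the $AC$ side can locally distinguish them by a projective filter, and since local filtering cannot create entanglement, separability across $B|AC$ would force each conditional state --- each of which is a pure state entangled across $B|AC$ --- to be separable, a contradiction.
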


\begin{proof}
The three-qutrit state that we consider here is given by-
\begin{equation}\label{3-qutrit}
\rho = p\ket{\psi}\bra{\psi}\otimes\ket{0}\bra{0} + (1-p)\ket{0}\bra{0}\otimes\ket{\psi}\bra{\psi},
\end{equation}
where $\ket{\psi} = \sum_{i=0}^{2}a_{i}\ket{ii}$, written in Schmidt form. $\rho$ is a rank-2 state which is a convex combination of two pure biseparable states $\ket{\psi}\bra{\psi}\otimes\ket{0}\bra{0}$ and $\ket{0}\bra{0}\otimes\ket{\psi}\bra{\psi}$. In $\ket{\psi}\bra{\psi}\otimes\ket{0}\bra{0}$, the bipartite entangled state is shared between A and B. On the other hand, in $\ket{0}\bra{0}\otimes\ket{\psi}\bra{\psi}$, the bipartite entangled state is shared between B and C.

Next, we consider that Charlie performs a projective measurement defined by - \(\{\ket{0}\bra{0}, \mathbb{I}-\ket{0}\bra{0}\}\) on a copy of $\rho$. The resultant  post-measurement state corresponding to the outcome $\mathbb{I}-\ket{0}\bra{0}$, is $\ket{0}\bra{0}\otimes\ket{\psi^\prime}\bra{\psi^\prime}$, where $\ket{\psi^\prime}$ is a two-qubit entangled state. This is because the projector $\mathbb{I}-\ket{\phi}\bra{\phi}$ projects the state $\ket{\psi}$ onto a two-dimensional subspace. Subsequently, tracing out the subsystem of A yields a pure bipartite entangled state between B and C. Similarly, if Alice also does the same projective measurement on a different copy of $\rho$, following the same argument as in Charlie's case, a pure bipartite entangled state can be prepared between A and B. In this way, from two copies of $\rho$, with some nonzero probability, it is possible to produce two copies of $\ket{\psi^\prime}$, one is shared between A and B, then, the other is shared between B and C. Now, we have already discussed that when such a resource is shared among three parties, a genuinely entangled state can be prepared among those parties. See Fig.~\ref{fig:problem} for your reference. Thus, we arrive to the above proposition. 
\end{proof}

\begin{figure}
\centering
\includegraphics[width=\linewidth]{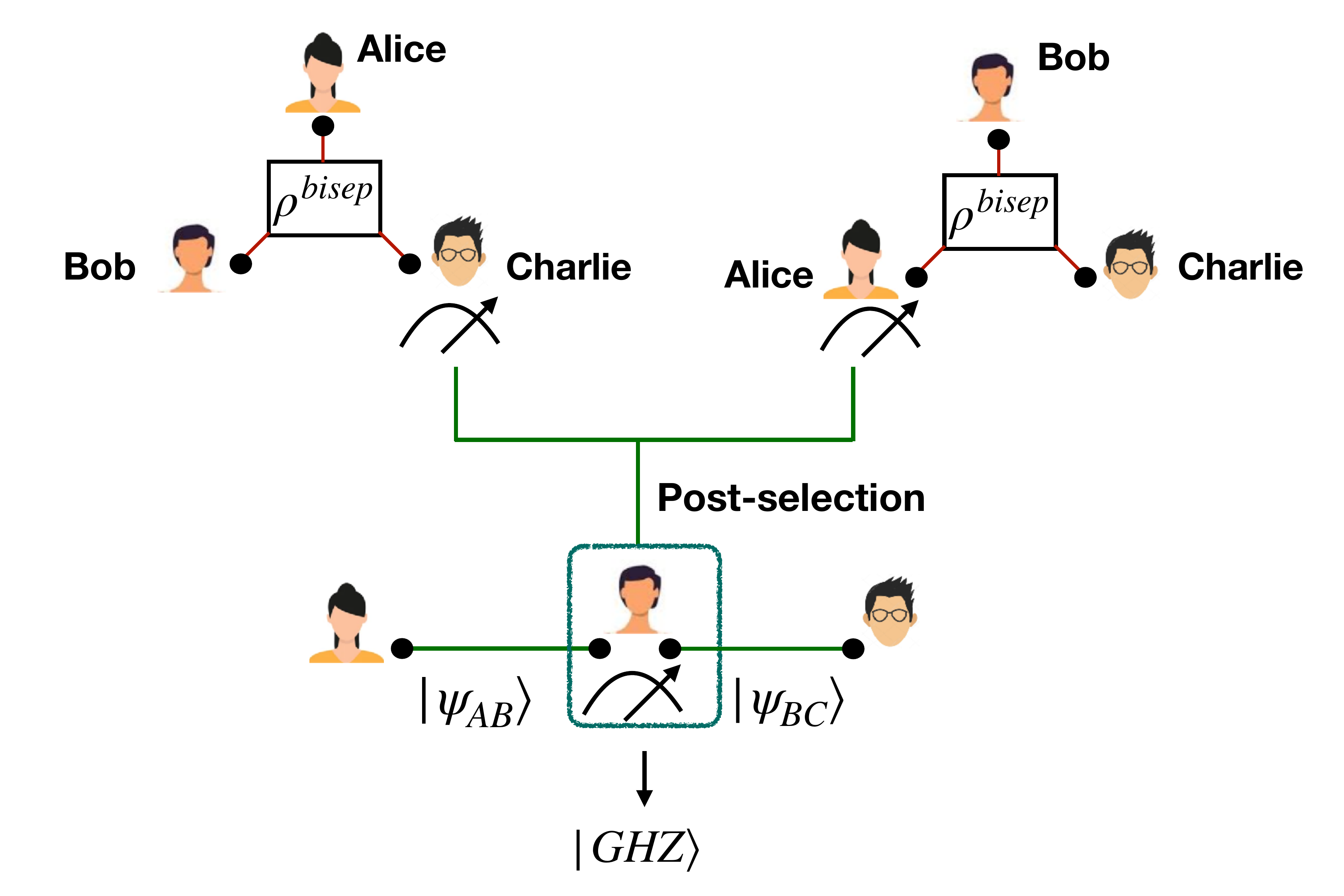}
\caption{Pictorial representation of the protocol for 3-qutrit systems: Suppose A(lice), B(ob) and C(harlie) share among them two copies of biseparable state as shown in Eq.~(\ref{3-qutrit}). Both Charlie and Alice do measurement defined by - \(\{\ket{0}\bra{0}, \mathbb{I}-\ket{0}\bra{0}\}\) on different copies and selects the preferable outcome state, i.e., the one corresponding to the measurement operator \(\{\mathbb{I}-\ket{0}\bra{0}\}\). Subsequently, tracing out certain part of the system gives non-maximally entangled state, in general, on AB and BC, i.e., $\ket{\psi_{AB}}\otimes \ket{\psi_{BC}}$. Then, using calculation from Eq.~(\ref{step2}), a genuinely entangled state can be shared between Alice, Bob and Charlie.}\label{fig:problem}
\end{figure}

The main difference between Proposition \ref{prop1} and Proposition \ref{prop2} is that in the later case we are considering only two copies (minimum number of copies) of the biseparable states for GME activation while in the former case many copies are required for GME activation. This improvement is happening because in the second case the composition states of $\rho$ have higher Schmidt ranks in certain bipartitions. If we choose not to consider higher Schmidt rank, then also it is possible to do GME activation from lower number of copies. Such states are given in the following subsection. Notice that the protocol which is described for Proposition \ref{prop2}, can be easily generalized for qudits where the dimension of a subsystem can be greater than three.

Based on Proposition \ref{prop1} and Proposition \ref{prop2}, it is also possible to classify the biseparable states that are useful to GME activation into two categories: (i) a class of biseparable states for which many copies are necessary to obtain a GME state probabilistically, (ii) the other class of biseparable states for which only two copies (minimum number) are sufficient to obtain a GME state probabilistically. Such a classification is useful to understand the structures of biseparable states in the context of GME activation.

We also mention that for both protocols corresponding to Proposition \ref{prop1} and Proposition \ref{prop2}, the copies of the biseparable states are given one by one. In fact, at present it is not known if there is any advantage of joint local operations on such copies.

\subsection{Success probability for a large number of copies}
Here we consider a specific class of mixed biseparable states which are associated with tripartite Hilbert space \(\mathbb{C}^3 \otimes \mathbb{C}^2 \otimes \mathbb{C}^3\). The state is a convex mixture of two pure biseparable terms and it is entangled across every bipartition. The class of states is given by-  

\begin{align}
\sigma = p \ket{2}\bra{2}\otimes \ket{\phi^+}\!\bra{\phi^+} + (1-p)\ket{\phi^+}\bra{\phi^+}\otimes \ket{2}\bra{2},
\end{align}
where \(\ket{\phi^+} = \frac{1}{\sqrt{2}}(\ket{00} + \ket{11})\) represents the standard Bell state, and \(0<p<1\). Consider that Alice performs a two-outcome projective measurement on her subsystem, which is a qutrit, using the orthogonal projectors $\{ P^A_1, P^A_2 \} := \{ \ket{0}\!\bra{0} + \ket{1}\!\bra{1}, \ket{2}\!\bra{2} \}$. This measurement effectively distinguishes whether Alice’s state lies in the qubit subspace \(\text{span}\{\ket{0},\ket{1}\}\) or in the state \(\ket{2}\). A straightforward calculation shows that if the outcome corresponds to \(P^A_1\), which happens with probability \((1-p)\), then Alice and Bob share the maximally entangled state \(\ket{\phi^+}_{AB}\). Conversely, if the outcome corresponds to \(P^A_2\), which happens with probability \(p\), then Bob and Charlie share \(\ket{\phi^+}_{BC}\). In either case, the post-measurement state between two parties is a Bell pair, and thus a maximally entangled state is obtained with certainty (success probability \(100\%\)). Although, the specific pair of parties which are sharing the entanglement, depends on the measurement outcomes.

Suppose, without loss of generality, that Alice obtains the first outcome and hence Alice and Bob share a Bell pair. We now introduce an additional independent copy of \(\sigma\). On this second copy, Charlie performs an analogous two-outcome measurement \(\{P^C_1, P^C_2\}\), defined by $\{ P^C_1, P^C_2 \} := \{ \ket{0}\!\bra{0} + \ket{1}\!\bra{1}, \ket{2}\!\bra{2} \}$. Conditioned on Charlie’s measurement, the outcomes exhibit similar behavior: with probability \(p\), Bob and Charlie share a Bell pair, while with probability \((1-p)\), Alice and Bob share a Bell pair. Since Alice’s measurement in the first copy already produced a Bell pair between \(A\) and \(B\), Charlie accepts only the outcome where \(B\) and \(C\) share a Bell pair, which occurs with probability \(p\). [This is because the three parties want to produce here a perfect tripartite resource state using which a tripartite genuine entangled state can be distributed.] If the desired outcome does not occur, Charlie discards the copy and repeats the measurement on subsequent copies until success is achieved.

So, the probability of success after \(n\) independent trials is given by  $P_n = 1 - (1 - p)^n,$ which can approach to unity in the asymptotic limit (number of copies is very large):  
\begin{align}
\lim_{n \to \infty} P_n = 1 - \lim_{n \to \infty} (1 - p)^n = 1.
\end{align}
Finally, assume that Bob locally prepares a tripartite entangled state, such as the GHZ state or the W state. Using the two shared Bell pairs, one is shared between \(A\) and \(B\) and the other is shared between \(B\) and \(C\), obtained from the copies of \(\sigma\), Bob can teleport two qubits of the tripartite state to Alice and Charlie. Consequently, this protocol demonstrates that many copies of the biseparable state \(\sigma\), although not genuinely entangled, can act as a resource for GME activation. 

If the three parties share copies of the biseparable state $\sigma^\prime$ = $p\ket{2}\bra{2}\otimes\ket{\phi^\prime}\bra{\phi^\prime}+(1-p)\ket{\phi^\prime}\bra{\phi^\prime}\otimes\ket{2}\bra{2}$, where $\ket{\phi^\prime}$ is a non-maximally entangled two-qubit state, $0<p<1$, then to create a perfect resource state, appropriate for a teleportation based protocol, additional steps of measurements are required. Nevertheless, if the parties choose not to consider these additional steps, then, they can proceed with protocol of creating a GHZ state, described just before Remark \ref{rem1}, instead of considering a teleportation based protocol.  

\subsection{Generalization}
Here we provide possible generalizations for multipartite systems (number of parties $>3$) of the protocols that are described previously for tripartite systems. We start with the following proposition.

\begin{proposition}\label{prop3}
There are rank-3 biseparable states in $\mathbb{C}^4\otimes\mathbb{C}^4\otimes\mathbb{C}^4\otimes\mathbb{C}^4$ which are entangled across every bipartition. From three copies of such a state, it is possible to obtain a genuinely entangled state with some nonzero probability. 
\end{proposition}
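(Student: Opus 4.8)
```latex
The plan is to generalize the four-party, three-copy construction by following the same two-stage recipe used in Proposition~\ref{prop2}: first use local projective measurements on individual copies to convert each copy into a bipartite pure entangled state shared by a suitable pair of parties, and then stitch these bipartite resources together into a genuinely four-partite entangled state. Concretely, I would take the rank-3 biseparable state to be a convex combination
\begin{equation}
\rho = p_1\, \ket{\psi}\!\bra{\psi}_{AB}\otimes\ket{0}\!\bra{0}_{C}\otimes\ket{0}\!\bra{0}_{D} + p_2\, \ket{0}\!\bra{0}_{A}\otimes\ket{\psi}\!\bra{\psi}_{BC}\otimes\ket{0}\!\bra{0}_{D} + p_3\, \ket{0}\!\bra{0}_{A}\otimes\ket{0}\!\bra{0}_{B}\otimes\ket{\psi}\!\bra{\psi}_{CD},
\end{equation}
with $\sum_i p_i = 1$, $p_i>0$, and $\ket{\psi}=\sum_{i=0}^{3}a_i\ket{ii}$ a two-ququart entangled state written in Schmidt form. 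The first task is to verify that this state is entangled across every bipartition; the three-term, cyclically-chained structure (entanglement living on $AB$, $BC$, $CD$ in the three terms) is designed precisely so that no single bipartition can separate all three terms simultaneously.

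Next I would describe the measurement stage. On the first copy, party $D$ performs the projective measurement $\{\ket{0}\!\bra{0},\,\mathbb{I}-\ket{0}\!\bra{0}\}$; conditioned on the outcome $\mathbb{I}-\ket{0}\!\bra{0}$, only the third term survives, and after tracing out the spectators one is left with a pure entangled state $\ket{\psi^\prime}_{CD}$ (the projector restricts $\ket{\psi}$ to the subspace orthogonal to $\ket{0}$ on $D$'s side, which remains entangled since $\ket{\psi}$ has Schmidt rank greater than two). On the second copy, party $A$ performs the analogous measurement to kill the first term and isolate a pure entangled state $\ket{\psi^\prime}_{BC}$ shared by $B$ and $C$. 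On the third copy, a measurement that isolates the first term yields $\ket{\psi^\prime}_{AB}$. The upshot is that from the three copies, with nonzero probability, the parties obtain a linear chain of bipartite pure entangled states: $AB$, $BC$, and $CD$.

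Finally I would invoke the resource-combination argument already established earlier in the paper: a chain of bipartite pure entangled pairs can be fused into a genuinely multipartite entangled state. Here $B$ holds one half of the $AB$ pair and one half of the $BC$ pair, and $C$ holds one half of the $BC$ pair and one half of the $CD$ pair; either via the teleportation-based protocol (a central party prepares a four-qubit GHZ or W state and teleports the appropriate qubits along the chain) or via the local projective-measurement-and-swap procedure described just before Remark~\ref{rem1}, the parties convert the three bipartite pairs into a genuinely four-partite entangled state with nonzero probability. This completes the activation and hence the proposition.

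I expect the main obstacle to lie in the first step, namely checking that the proposed rank-3 state is genuinely entangled across \emph{every} bipartition rather than merely across the three ``adjacent'' cuts built into the construction. For four parties there are seven bipartitions to verify (the cuts $A|BCD$, $B|ACD$, $C|ABD$, $D|ABC$, $AB|CD$, $AC|BD$, $AD|BC$), and one must confirm that for each cut at least one term of the mixture is entangled across it and that the convex mixture does not accidentally become separable. A convenient way to settle this is to compute a bipartite entanglement witness or the partial transpose for each cut, or to argue structurally that the cyclic chain $AB$–$BC$–$CD$ ensures every cut splits at least one of the entangled pairs. Once the entanglement structure is confirmed, the measurement and fusion steps are routine adaptations of the tripartite argument and the known distillability/teleportation facts cited above.
```
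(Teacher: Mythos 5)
Your overall architecture matches the paper's proof: a rank-3 chained mixture with entanglement on $AB$, $BC$, $CD$, local projective measurements on three copies to extract the three bipartite pairs, and then the fusion step (joint parity measurements by $B$ and $C$, or teleportation) already established earlier in the paper. However, there is a genuine gap in your measurement stage, and it traces back to your choice of spectator states. You attach the flag $\ket{0}$ to \emph{every} spectator, whereas the paper's state deliberately uses distinct flags: $p_1\ket{\psi}\bra{\psi}_{AB}\otimes\ket{0}\bra{0}_C\otimes\ket{0}\bra{0}_D + p_2\ket{0}\bra{0}_A\otimes\ket{\psi}\bra{\psi}_{BC}\otimes\ket{1}\bra{1}_D + p_3\ket{1}\bra{1}_A\otimes\ket{1}\bra{1}_B\otimes\ket{\psi}\bra{\psi}_{CD}$. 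With your uniform flags, the $AB$ and $CD$ extractions work as you say, but the middle ($BC$) step fails as described. You claim that party $A$'s measurement can ``kill the first term and isolate a pure entangled state $\ket{\psi^\prime}_{BC}$,'' yet in the first term $A$ holds half of $\ket{\psi}=\sum_i a_i\ket{ii}$, whose reduced state has full support on $\mathbb{C}^4$; no outcome of a projective measurement by $A$ annihilates that term. The outcome $\mathbb{I}-\ket{0}\bra{0}$ retains it (yielding an $AB$ pair, not a $BC$ pair), while the outcome $\ket{0}\bra{0}$ collapses it to the product $\ket{00}_{AB}$ with probability $|a_0|^2 p_1>0$; likewise $D$'s $\ket{0}$ outcome leaves a residual $\ket{00}_{CD}$ component from the third term. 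After $A$ and $D$ both project onto $\ket{0}$, the state $B$ and $C$ share is the \emph{mixed} state $\lambda\ket{00}\bra{00}+(1-\lambda)\ket{\psi}\bra{\psi}$, not a pure pair, so the three-copy count is not secured by your protocol as written.

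The repair is straightforward in either of two ways. First, you can adopt the paper's flag structure, in which two parties per copy apply the three-outcome measurement $\{\ket{0}\bra{0},\,\ket{1}\bra{1},\,\mathbb{I}-\ket{0}\bra{0}-\ket{1}\bra{1}\}$: the distinct flags guarantee that the pair of third outcomes kills the two unwanted branches exactly and leaves a pure two-qubit entangled state $\ket{\psi^\prime}$ (supported on $\mathrm{span}\{\ket{2},\ket{3}\}$) between the intended pair. Second, you can keep your state but add one extra local filter: after $A$ and $D$ both obtain $\ket{0}$, let $B$ measure $\{\ket{0}\bra{0},\,\mathbb{I}-\ket{0}\bra{0}\}$; the outcome $\mathbb{I}-\ket{0}\bra{0}$ removes the $\ket{00}$ junk and leaves the pure entangled state $a_1\ket{11}+a_2\ket{22}+a_3\ket{33}$ (up to normalization) between $B$ and $C$ with nonzero probability. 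With either fix, the rest of your argument goes through as in the paper, with one small mismatch worth flagging: the paper's explicit fusion computation is written for Schmidt-rank-2 pairs $a\ket{00}+b\ket{11}$, while your extracted pairs have Schmidt rank 3, so you would either adapt the parity measurement or filter the pairs down to qubit pairs first. Your closing concern about verifying entanglement across all seven bipartitions is legitimate (the paper does not spell this out either), but the concrete failure in your proposal is the $BC$ isolation step, not that verification.
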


\begin{proof}
The state that we consider here is given by-
\begin{equation}
\begin{aligned}
\rho
=\;& p_1\,\ket{\psi}\bra{\psi}_{AB}
\otimes \ket{0}\bra{0}_{C}
\otimes \ket{0}\bra{0}_{D} \\
&+ p_2\,\ket{0}\bra{0}_{A}
\otimes \ket{\psi}\bra{\psi}_{BC}
\otimes \ket{1}\bra{1}_{D} \\
&+ p_3\,\ket{1}\bra{1}_{A}
\otimes \ket{1}\bra{1}_{B}
\otimes \ket{\psi}\bra{\psi}_{CD},
\end{aligned}
\end{equation}
where $\ket{\psi} = \sum_{i=0}^{3}a_{i}\ket{ii}$, written in Schmidt form and $p_{1}+p_{2}+{p_3}=1$ with $0<p_{i}<1$ $\forall$ $ i\in \{1,2,3\}$. $\rho$ is a rank-3 state which is a convex combination of three pure biseparable states. $\rho$ is shared among four parties A, B, C, and D. Let the party, C performs a projective measurement defined by - \(\{\ket{0}\bra{0}, \ket{1}\bra{1}, \mathbb{I}-\ket{0}\bra{0}- \ket{1}\bra{1}\}\) on a copy of $\rho$. The resultant  post-measurement state corresponding to the projector $\mathbb{I}-\ket{0}\bra{0}- \ket{1}\bra{1}$, is then given by $p_2^\prime\,\ket{0}\bra{0}_{A}
\otimes \ket{\psi^\prime}\bra{\psi^\prime}_{BC}
\otimes \ket{1}\bra{1}_{D}+p_3^\prime\ket{1}\bra{1}_{A}
\otimes \ket{1}\bra{1}_{B}
\otimes \ket{\psi^\prime}\bra{\psi^\prime}_{CD}$, where $\ket{\psi^\prime}$ is a two-qubit entangled state because the projector $\mathbb{I}-\ket{0}\bra{0}- \ket{1}\bra{1}$ projects the state $\ket{\psi}$ onto a two-dimensional subspace, $p_2^\prime+p_3^\prime=1$, $0<p_i^\prime<1$, $i=2,3$. If the party D further does the same measurement and get the same outcome like the party C, then, the four parties are left with $\ket{1}\bra{1}_{A}
\otimes \ket{1}\bra{1}_{B}
\otimes \ket{\psi^\prime}\bra{\psi^\prime}_{CD}$. Subsequently, tracing out the subsystems of A and B yields a pure bipartite entangled state between C and D. Similarly, if A and B perform the same projective measurements on a different copy of $\rho$ and obtain outcomes like C and D, following the same arguments as in the case of C and D, a pure bipartite entangled state can be prepared between A and B. Finally, Considering outcomes of the measurements of B and C for another copy of $\rho$, it is possible to prepare a pure bipartite entangled state between B and C. In this way, from three copies of $\rho$, with some nonzero probability, it is possible to produce three copies of $\ket{\psi^\prime}$, one is shared between A and B, one is shared between B and C then, the final one is shared between C and D. The joint state is similar to the state, given by-
\begin{equation}
\begin{array}{l}
\ket{\Psi} = \ket{\phi}_{AB} \otimes \ket{\phi}_{BC} \otimes \ket{\phi}_{CD}\\[1 ex]
= a^3 \ket{0}\ket{00}\ket{00}\ket{0} + b^3 \ket{1}\ket{11}\ket{11}\ket{1} \\[1 ex]
+ a^{2}b \big(\ket{0}\ket{00}\ket{01}\ket{1} + \ket{0}\ket{01}\ket{10}\ket{0}+\ket{1}\ket{10}\ket{00}\ket{0}\big)\\[1 ex]
+ ab^{2}\big(\ket{1}\ket{11}\ket{10}\ket{0}+\ket{1}\ket{10}\ket{01}\ket{1}+ \ket{0}\ket{01}\ket{11}\ket{1}\big),
\end{array}
\end{equation}
where $\ket{\phi}_{AB}=a\ket{00}+b\ket{11}$, $\ket{\phi}_{BC}=a\ket{00}+b\ket{11}$ and $\ket{\phi}_{CD}=a\ket{00}+b\ket{11}$ written in the Schmidt form, hence $a,b$ are real with $a^2+b^2=1$ and $a=b=\frac{1}{\sqrt{2}}$ when it corresponds to a maximally entangled state. Let B performs a joint measurement on the two qubits he is holding, defined by- $\Big\{P_1,P_2\Big\}:=\Big\{ \ket{00}\bra{00} + \ket{11}\bra{11},\; \ket{01}\bra{01} + \ket{10}\bra{10}\Big\}$. If $P_1$ clicks, the output is proportional to $a^3 \ket{0}\ket{00}\ket{00}\ket{0}  +  a^2b\ket{0}\ket{00}\ket{01}\ket{1} + b^2 a \ket{1}\ket{11}\ket{10}\ket{0} +b^3 \ket{1}\ket{11}\ket{11}\ket{1}$. Let C also does the same measurement on 2 qubits he is holding. Then, If $P_1$ clicks given that in Bob's measurement $P_{1}$ had clicked,  the output is proportional to $a^3 \ket{000000}+b^3 \ket{111111}$. B and C then measures in the $\{ \lvert + \rangle, \lvert - \rangle \}$ basis on, say, the first qubits they are holding and then tracing that out gives final state $a^{3}\ket{0000} + b^{3}\ket{1111} 
\quad \text{(up to normalization).}$ Note that corresponding to other three combinations of measurement outcomes, final state will be of the similar type.
\end{proof}

The generalization of Proposition \ref{prop3} for higher number of parties ($\geq5$) is quite straightforward. We also mention that similar argument like Remark \ref{rem1} is applicable here as well. For the sake of completeness, we give the generalization for higher number parties($\ge 5$) as the remark given below. \textcolor{black}{However, there is a limitation to our proposed protocol in the sense that as the number of parties increases, the rank of the biseparable state and the copies of such states required
to do GME activation may increase. Also, experimental complexity of the protocol is not under the scope of this work and hence
has not been discussed in the work and is left for further research.} 

\begin{remark}\label{re}
There are d-partite rank-(d-1) biseparable states in $\mathbb{C}^{d}\otimes ....\otimes \mathbb{C}^{d}$ which are entangled across every bipartition. From d-1 copies of such a state, it is possible to obtain a genuinely entangled state with some non-zero probability.   
\end{remark}

\section{Conclusion}\label{Conclusion}
Most of the existing protocols for distributing genuine multipartite correlations employ teleportation-based schemes where one party locally prepares a genuinely multipartite entangled state. In our schemes also, we can think about including a teleportation based step. However, on the other hand, our schemes can be thought of excluding any teleportation based step, introducing a conceptually simple \textcolor{black}{sequential extraction-and-fusion } protocol for the activation of genuine multipartite entanglement. Clearly, our  protocol does not always require any locally prepared genuinely entangled state. Furthermore, our protocols do not rely on the implementations of joint measurements on copies of quantum states in each step. It is also quite interesting to note that considering our proposed protocol in three qutrit systems, just two copies of biseparable states can lead to sharing of genuine multipartite correlations among the parties. Building upon these insights, we have extended our protocols to higher number of parties $n>3$, thereby broadening the scope of the protocols. Finally, we mention that the present protocols focus on producing pure genuinely entangled states which may exhibit genuine nonlocality. This is, in fact, stronger than activating genuine multipartite entanglement. We have also identified a class of biseparable states for which the success probability approaches unity in the asymptotic limit. It would be of significant interest to investigate how the success probability varies with an increasing number of copies across general classes of biseparable states. 
Developing systematic methods to address this broader and more complex problem would be interesting for further studies.\\


\textit{Acknowledgments}. SC and US acknowledges partial support from the Department of Science and Technology, Government of India, through the QuEST grant with Grant No. DST/ICPS/QUST/Theme-3/2019/120 via I-HUB QTF of IISER Pune, India.\\

\textit{Data availability}. No data were created or analysed in this study.
  \bibliography{ref}
\end{document}